\documentclass[amsmath,amssymb,notitlepage]{revtex4-1}

\usepackage{enumerate}
\usepackage{latexsym}
\usepackage{amsfonts}
\usepackage{amsthm}
\usepackage{color}
\usepackage{bbm}
\usepackage{dsfont}
\usepackage{graphicx}
\usepackage{textcomp}
\usepackage{enumerate}
\usepackage{graphicx,tikz}
\usetikzlibrary{arrows,matrix,calc}
\usepackage[colorlinks=true,linkcolor=blue,citecolor=blue,urlcolor=blue]{hyperref}
\usepackage{mathrsfs} 


\newtheorem{theorem}{Theorem}

\newtheorem{lemma}{Lemma}

\newcommand{\cH}{\mathcal{H}}
\newcommand{\cT}{\mathcal{T}}



\DeclareMathOperator{\Tr}{Tr} 
\DeclareMathOperator{\reals}{\mathbb{R}}
\DeclareMathOperator{\comp}{\mathbb{C}} 
\DeclareMathOperator{\iden}{\mathbbm{1}}

\newcommand{\norm}[1]{\left\lVert#1\right\rVert}

\setlength\parindent{8pt}

\usepackage{geometry}
 \geometry{
 a4paper,
 total={140mm,227mm},
 left=35mm,
 top=35mm,
 }

\begin{document}
\title{Are almost-symmetries almost linear?}

\author{Javier Cuesta}
\email{j.cuesta@tum.de}
\affiliation{Department of Mathematics, Technische Universit\"at M\"unchen,  Garching, Germany}
\affiliation{Munich Center for Quantum
Science and Technology (MCQST),  M\"unchen, Germany}
\author{Michael M.~Wolf}
\email{m.wolf@tum.de}
\affiliation{Department of Mathematics, Technische Universit\"at M\"unchen,  Garching, Germany}
\affiliation{Munich Center for Quantum
Science and Technology (MCQST),  M\"unchen, Germany}
\date{\today}

\vspace{-1cm}\begin{abstract}

It $d-$pends. Wigner's symmetry theorem implies that transformations that preserve transition probabilities of pure quantum states are linear maps on the level of density operators. We investigate the stability of this implication. On the one hand, we show that any transformation that preserves transition probabilities up to an additive $\varepsilon$ in a separable Hilbert space admits a weak  linear approximation, i.e. one relative to any fixed observable. This implies the existence of a linear approximation that is $4\sqrt{\varepsilon} d$--close in Hilbert-Schmidt norm, with $d$  the  Hilbert space dimension. On the other hand, we prove that  a  linear approximation that is close in  norm and independent of $d$ does not exist in general. To this end, we provide a lower bound that depends logarithmically on $d$.
\end{abstract}

\maketitle
\tableofcontents

\section{ Introduction and summary of results}

Wigner's theorem~\cite{Wig31} is a cornerstone for the mathematical representation of symmetries in quantum physics. It tells us that an arbitrary transformation on the set of pure states that preserves the ``transition probabilities'' must  necessarily correspond to a unitary or antiunitary operation. In particular, the transformation is representable by a \textit{linear map} on the level of density operators. Hence, Wigner's theorem is  arguably one of the reasons for the linear structure of quantum theory (besides various forms of locality\cite{Gisin,Jordan} and the probabilistic framework \cite{MM}).

 Wigner's theorem was proven in the general case, which does not assume bijectivity of the map, by Bargmann~\cite{Barg64} thirty years after Wigner's original idea. 
Recently~\cite{Geher14,SMCS14},  new and simpler proofs of this theorem have appeared where neither bijectivity of the map nor separability of the underlying complex Hilbert space $\mathcal{H}$ is assumed. If we denote by $\mathbb{P}(\mathcal{H})$ the set of pure states, identified with rank-one self-adjoint projections, Wigner's theorem reads:

\begin{theorem}[Wigner] Let $f:\mathbb{P}(\mathcal{H})\to \mathbb{P}(\mathcal{H})$ be a map that preserves transition probabilities, i.e. such that $\Tr f(X)f(Y)=\Tr XY$ for all $X,Y\in \mathbb{P}(\mathcal{H})$. Then there exists a linear or antilinear isometry $U:\mathcal{H}\to\mathcal{H}$ such that $f(X)=UXU^{\dagger}$.
\end{theorem}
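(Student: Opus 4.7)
My plan is to reconstruct the (anti)unitary $U$ explicitly from the action of $f$ on a fixed orthonormal basis and then to show that the resulting coordinate-by-coordinate description can be globally identified with either a linear or an antilinear isometry.

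First I would fix an orthonormal basis $\{e_n\}$ of $\mathcal{H}$ and choose unit vectors $\phi_n$ so that $f(\kb{e_n}{e_n}) = \kb{\phi_n}{\phi_n}$. Preservation of transition probabilities yields $|\ip{\phi_n}{\phi_m}|^2 = |\ip{e_n}{e_m}|^2 = \delta_{nm}$, so $\{\phi_n\}$ is orthonormal. For a general unit vector $\psi = \sum_n c_n e_n$, write $f(\kb{\psi}{\psi}) = \kb{\psi'}{\psi'}$ and expand $\psi' = \sum_n d_n \phi_n + r$ with $r \perp \mathrm{span}\{\phi_n\}$. The hypothesis applied to $\psi$ and each $e_n$ gives $|d_n|=|c_n|$, and normalization then forces $\|r\|=0$. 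Consequently $d_n = \omega_n(\psi)\,c_n$ for unimodular phases $\omega_n(\psi)$ that remain to be pinned down.

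Next I would fix relative phases using the test states $v_n := (e_1+e_n)/\sqrt{2}$ for $n\geq 2$: by the previous step, $f(\kb{v_n}{v_n})$ is the projection onto a unit vector of the form $(\phi_1 + \alpha_n \phi_n)/\sqrt{2}$, and absorbing each $\alpha_n$ into the definition of $\phi_n$ I may assume $f(\kb{v_n}{v_n}) = \kb{w_n}{w_n}$ with $w_n := (\phi_1+\phi_n)/\sqrt{2}$. For any $\psi = c_1 e_1 + c_n e_n$ with $c_1\neq 0$, the identity $|\ip{\psi'}{w_n}|^2 = |\ip{\psi}{v_n}|^2$ then forces $\arg(d_n/d_1)$ to equal either $\arg(c_n/c_1)$ or $-\arg(c_n/c_1)$; i.e.\ in these two coordinates $f$ acts locally either linearly or antilinearly.

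The main obstacle is to upgrade this pointwise dichotomy to a \emph{global} choice when $\dim \mathcal{H}\geq 3$, since a priori the linear/antilinear alternative could depend on $\psi$. I would handle this by testing superpositions $\psi = \sum_{k=1}^{3} c_k e_k$ against all three pair states $v_{ij} := (e_i+e_j)/\sqrt{2}$ simultaneously: a short case analysis shows that only a uniform choice of linearity or antilinearity across all three pairs is compatible with the three transition-probability constraints. Once a global sign is obtained, I define $U e_n := \phi_n$ and extend (anti)linearly to the closed linear span of $\{e_n\}$; the result is an isometry satisfying $f(\kb{\psi}{\psi}) = U\kb{\psi}{\psi}U^{\dagger}$ on all pure states. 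A small separate argument handles the non-separable case and the possibility that $\{\phi_n\}$ does not span $\mathcal{H}$, but this is automatic once $U$ has been constructed on its image.
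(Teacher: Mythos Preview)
The paper does not actually prove this theorem: Wigner's theorem is stated as background and attributed to the literature (Bargmann~\cite{Barg64} and the recent elementary proofs~\cite{Geher14,SMCS14}). There is therefore no proof in the paper to compare your proposal against.

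That said, your plan is essentially Bargmann's classical argument and is sound in outline. One step deserves more care than ``a short case analysis'': the three-vector test you describe shows that, for a \emph{fixed} $\psi$ with at least three nonzero coefficients, the linear/antilinear alternative is consistent across all index pairs $(i,j)$. It does not by itself show that the \emph{same} alternative is chosen for every $\psi$. The standard way to close this is to first pin down the global alternative using the single test vector $(e_1+i e_2)/\sqrt{2}$, and then, for an arbitrary $\psi$ with $c_1\neq 0$ and $c_2\neq 0$, compare $\psi'$ against the image of that test vector to force the same sign; vectors with $c_1=0$ or $c_2=0$ are then handled by passing through an auxiliary index. This is routine, but it is exactly the point where $\dim\mathcal H\ge 3$ is genuinely used, so it is worth spelling out rather than absorbing into the phrase ``once a global sign is obtained.''
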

The theorem can be seen to establish two things: linearity and isometry. In the present work, we focus on the stability of the linearity property. That is, we address the question: if a map between pure states almost preserves transition probabilities, how well can it be approximated by a linear map? 

Recently, a sequence of works~\cite{Ortho,Jordanauto,IT15} culminated in the result that Wigner's theorem is stable for finite-dimensional Hilbert spaces. We want to shed new light at least on the linear part of the problem and investigate in particular the role of the dimension of the underlying Hilbert space. To this end, we follow a different route than~\cite{Jordanauto,IT15} and employ (convex) geometry rather than analysis of operator algebras for the main argument. 
 Our results are two-fold. On the one hand, any map that approximately satisfies Wigner's symmetry condition in any separable Hilbert space is shown to admit a weak linear approximation. That is, when evaluated through an arbitrary but fixed observable, there exists a linear approximation even in case of infinite dimensional Hilbert spaces. As a corollary we obtain a linar approximation in Hilbert-Schmidt-norm whose accuracy scales linearly with dimension. This improves on the corresponding result of \cite{IT15}. 
 
In the second part, we address the problem from the other end and prove that a linear approximation in norm does not always exist in infinite dimensions---not even with respect to the operator norm. For that purpose, we study 
a componentwise logarithmic spiral map and prove that its operator norm distance to the set of linear maps essentially scales  logarithmically with the dimension of the Hilbert space. This holds despite the fact that the action of the map is arbitrarily close to that of a symmetry in Wigner's sense.

\section{Preliminaries}

We now introduce some notation and definitions. We denote by $\cH$ a complex Hilbert space, which we assume to be separable in the following. The space of bounded linear operators on $\mathcal{H}$ is denoted by $\mathfrak{B}(\mathcal{H})$ and its identity element by $\iden$. The adjoint of an operator $X$ is written as $X^{*}$. For $p\in[1,\infty)$ we denote by  $\mathcal{T}_{p}(\cH):=\{X\in\mathfrak{B}(\cH)|X=X^*, \norm{X}_{p}:=(\Tr |X|^{p} )^{1/p}  < \infty \}$  the real Banach space known as the hermitian $p-$Schatten class and its respective unit ball by $\mathcal{B}_{p}(\cH):=\{X\in\mathcal{T}_p(\cH)|\norm{X}_{p}\leq 1\}$. $\norm{\cdot}_\infty$ will be the operator norm on $\mathfrak{B}(\cH)$.

Occasionally, we will make use of the  Dirac ``bra-ket" notation where a vector in $\cH$ is written as $|x\rangle$ and the scalar product of two vectors $|x\rangle,|y\rangle$ as $\langle x|y\rangle$. A rank-one projection in $\mathfrak{B}(\cH)$ with range spanned by a unit vector $|x\rangle$ is then $|x\rangle\langle x|$. Using this we define $\mathbb{P}(\cH):=\{|\psi\rangle\langle\psi|\;|\;|\psi\rangle\in\cH, \norm{\psi}_{2}=1 \}$, which is the set of pure quantum states written as density operators.

\section{Almost-symmetries are close to linear} \label{sec:stability}
The main result of this section is summarized in the following theorem:

\begin{theorem}[Linear approximation of almost-symmetries]\label{th:stabwigner}
Let $\cH$ be a separable complex Hilbert space and $f:\mathbb{P}(\cH)\to\cT_2(\cH)$ a map satisfying  
\begin{equation}\label{eq:W}
\left| \Tr f(X)f(Y)-\Tr XY  \right|\leq \varepsilon \quad \text{for all }X,Y\in\mathbb{P}(\cH).
\end{equation}
\begin{enumerate}[(i)]
\item For any $A\in\mathcal{B}_{2}(\cH)$ there is a linear map $T_A:\cT_1(\cH)\to \cT_2(\cH) $ such that for all $X\in\mathbb{P}(\cH)$
\begin{equation*}
\left| \Tr\big[A\big(f(X)-T_A(X)\big)\big]\right|\leq  4\sqrt{\varepsilon}.
\end{equation*}
\item
If $\cH=\comp^d$, there exists a linear map $T:\cT_1(\cH)\to \cT_2(\cH) $ such that for all $X\in\mathbb{P}(\comp^{d})$
\begin{equation*}
\norm{f(X)-T(X)}_{2}\leq 4 d\sqrt{\varepsilon}.
\end{equation*}
\end{enumerate}
\end{theorem}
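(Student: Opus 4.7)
My plan is to first establish (i) by producing a Hermitian operator $B_A\in\mathfrak{B}(\mathcal{H})$ with
\[
\left|\Tr[B_A X]-\Tr[Af(X)]\right|\leq 4\sqrt{\varepsilon}\quad\text{for every }X\in\mathbb{P}(\mathcal{H}),
\]
and then setting $T_A(X):=\Tr[B_A X]\,A/\|A\|_2^2$ (and $T_A\equiv 0$ if $A=0$). This $T_A$ is linear from $\mathcal{T}_1(\mathcal{H})$ to $\mathcal{T}_2(\mathcal{H})$ with $\Tr[A T_A(X)]=\Tr[B_A X]$, so the theorem's inequality follows directly. The quantitative core is a Gram-matrix comparison obtained by expanding squared HS-norms: for pure states $X_1,\dots,X_n$ and real $c_1,\dots,c_n$, applying \eqref{eq:W} pair by pair gives
\[
\left|\Big\|\sum_i c_i f(X_i)\Big\|_2^2 - \Big\|\sum_i c_i X_i\Big\|_2^2\right|\leq\varepsilon\|c\|_{\ell^1}^2,
\]
from which $\bigl|\|\sum_i c_i f(X_i)\|_2-\|\sum_i c_i X_i\|_2\bigr|\leq\sqrt{\varepsilon}\|c\|_{\ell^1}$. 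In particular, whenever $\sum_i c_i X_i=0$ one has $\|\sum_i c_i f(X_i)\|_2\leq\sqrt{\varepsilon}\|c\|_{\ell^1}$.

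The existence of $B_A$ I obtain via Hahn--Banach duality. On $\ell^1(\mathbb{P}(\mathcal{H}))$ (finitely supported real sequences) consider the linear map $\pi(c):=\sum_X c_X X\in\mathcal{T}_1(\mathcal{H})$ and the linear functional $\Lambda(c):=\sum_X c_X\Tr[Af(X)]$. Combining the key estimate with Cauchy--Schwarz, $|\Lambda(c)|\leq\sqrt{\varepsilon}\|c\|_{\ell^1}$ for every $c\in\ker\pi$ (using $\|A\|_2\leq 1$). Hahn--Banach extends $\Lambda|_{\ker\pi}$ to a functional $\Lambda'$ on all of $\ell^1$ with the same norm $\leq\sqrt{\varepsilon}$; then $\tilde\Lambda:=\Lambda-\Lambda'$ vanishes on $\ker\pi$ and hence factors through the quotient as a linear functional $L$ on $V:=\pi(\ell^1)=\mathrm{span}_{\mathbb{R}}\mathbb{P}(\mathcal{H})\subset\mathcal{T}_1(\mathcal{H})$. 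For any $Y=\pi(c)\in V$ one has $|L(Y)|\leq|\Lambda(c)|+|\Lambda'(c)|\leq\|Y\|_2+2\sqrt{\varepsilon}\|c\|_{\ell^1}$; selecting $c$ from the spectral decomposition of $Y$ attains $\|c\|_{\ell^1}=\|Y\|_1$, showing $L$ is $\|\cdot\|_1$-bounded on the dense subspace $V$. By density and the duality $\mathcal{T}_1(\mathcal{H})^*=\mathfrak{B}(\mathcal{H})$, $L$ is represented by some $B_A\in\mathfrak{B}(\mathcal{H})$. Finally, for each pure $X$,
\[
|\Tr[B_A X]-\Tr[Af(X)]|=|L(X)-\Lambda(e_X)|=|\Lambda'(e_X)|\leq\sqrt{\varepsilon}\leq 4\sqrt{\varepsilon}.
\]

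For (ii) with $\mathcal{H}=\mathbb{C}^d$, fix an orthonormal basis $\{E_k\}_{k=1}^{d^2}$ of $\mathcal{T}_2(\mathbb{C}^d)$ (so $\|E_k\|_2=1$) and, applying (i) with $A=E_k$, obtain linear maps $T_{E_k}$. Define
\[
T(X):=\sum_{k=1}^{d^2}\Tr[E_k T_{E_k}(X)]\,E_k.
\]
Then $\Tr[E_k T(X)]=\Tr[E_k T_{E_k}(X)]$, so $|\Tr[E_k(f(X)-T(X))]|\leq 4\sqrt{\varepsilon}$ for each $k$, and Parseval in $\mathcal{T}_2(\mathbb{C}^d)$ yields $\|f(X)-T(X)\|_2\leq\sqrt{d^2\cdot 16\varepsilon}=4d\sqrt{\varepsilon}$.

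The main technical obstacle is the continuity of $L$ on the dense subspace $V$ with respect to the trace norm: the natural bound involves an infimum over representations $Y=\sum c_X X$ of $\|c\|_{\ell^1}$ (the projective norm), which must be related to $\|Y\|_1$. For Hermitian trace-class $Y$ the spectral decomposition achieves $\|c\|_{\ell^1}=\|Y\|_1$, which closes the gap. A compactness-based alternative using Banach--Alaoglu on finite-subset approximants $B_F$ is conceivable, but it would require an a priori uniform control on their operator norms, which the Hahn--Banach route sidesteps.
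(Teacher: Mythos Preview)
Your argument is correct and follows a genuinely different route from the paper. The paper first extends $f$ to a positively homogeneous map $F$ on all of $\mathcal{T}_1(\mathcal{H})$ via fixed spectral decompositions, proves an almost-linearity estimate for $F$, and then applies the \emph{geometric} Hahn--Banach theorem: the convex and concave envelopes of $\hat F(X)=\Tr[AF(X)]$ on $\mathcal{B}_1(\mathcal{H})$ differ by at most $2\sqrt{\varepsilon}$, so the subgraph of $g_{+}-\delta$ and the supergraph of $g_{-}+\delta$ are separated by a hyperplane, producing the linear functional $h$. You instead lift the problem to the free $\ell^1$-space over $\mathbb{P}(\mathcal{H})$, observe that the Gram estimate bounds $\Lambda$ on $\ker\pi$, and use the \emph{analytic} Hahn--Banach extension to peel off the ``nonlinear part'' $\Lambda'$; the remainder $\Lambda-\Lambda'$ then drops to a bounded functional on $\mathcal{T}_1(\mathcal{H})$ via spectral decompositions. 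Your approach avoids the convex-envelope machinery and the explicit extension $F$, and in fact yields the sharper constant $\sqrt{\varepsilon}$ in part (i) (you evaluate only on the delta sequence $e_X$, where $\|e_X\|_{\ell^1}=1$), whereas the paper's linearisation-of-$h$ step costs an extra factor of~$2$. The paper's geometric picture, on the other hand, makes the mechanism behind part~(i) visually transparent (cf.\ their Figure~1) and works directly on the unit ball without the detour through the free $\ell^1$-space. Part~(ii) is handled identically in both proofs.
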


Note that Eq.~(\ref{eq:W}) is a slight relaxation of Wigner's condition since we allow $f$ to map into $\cT_2(\cH)$. That is, we do not restrict its range to the set of pure states. We will see that this generalization comes at no additional cost in the proof.

The overall strategy of the proof is the following: we first extend $f$ to a map $F$ that is defined on the entire space $\cT_1(\cH)$. Exploiting the condition in  Eq.~(\ref{eq:W}) we will show that $F$ almost preserves convex combinations. This enables the use of convex analysis and in particular of the geometric Hahn-Banach theorem to prove the existence of a linear approximation as stated in part (i) of the theorem. Part (ii) is then derived as a consequence of (i) by exploiting the existence of a finite basis.

We begin the proof of the theorem by extending the function $f$ to a larger domain. To this end,  we choose a spectral decomposition $X=\sum_{k}\lambda_{k}X_{k}$ for every $X\in\cT_ 1(\cH)$ for which $||X||_1=1$. Here $X_{k}\in\mathbb{P}(\cH)$ are assumed to be orthogonal with respect to the Hilbert-Schmidt inner product $\langle A, B\rangle:= \Tr A^{*}B$. Then we define $F:\cT_1(\cH)\to\cT_2(\cH)$ by extending
\begin{equation}\label{eq:F}
F(X):=\sum^{}_{k}\lambda_{k}f(X_{k}).
\end{equation} in a homogeneous way from the unit sphere to the entire space $\cT_1(\cH)$. Eq.~(\ref{eq:W}) then ensures that $\norm{F(X)}_2^2\leq\norm{X}_2^2+\varepsilon\norm{X}_1^2$ so that, indeed, $F(X)\in\cT_2(\cH)$.
By construction, $f$ is then the restriction of $F$ to the set $\mathbb{P}(\cH)$ of pure states and $F(\lambda X)=\lambda F(X)$ for all $\lambda\geq 0$. Moreover, Wigner's condition from Eq.~(\ref{eq:W}) easily extends to $F$:

\begin{lemma}[Wigner-condition for the extended map]\label{l:WF}
Let $f:\mathbb{P}(\cH)\to\cT_2(\cH)$ satisfy Eq.~\eqref{eq:W} and $F:\cT_1(\cH)\to\cT_2(\cH)$ be its  extension as defined above. Then
\begin{equation}\label{eq:WF}
\left| \langle F(X), F(Y)\rangle-\langle X,Y\rangle \right| \leq \norm{X}_{1}\norm{Y}_{1}\varepsilon, \qquad \text{for all }X,Y\in\cT_1(\cH).
\end{equation} 
\end{lemma}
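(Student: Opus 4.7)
The plan is to reduce the claim to the hypothesis on pure states by expanding both inner products via the spectral decompositions chosen in the definition of $F$. Fix the decompositions $X=\sum_{k}\lambda_{k}X_{k}$ and $Y=\sum_{j}\mu_{j}Y_{j}$ with $X_{k},Y_{j}\in\mathbb{P}(\cH)$ pairwise orthogonal and $\lambda_{k},\mu_{j}\in\reals$. By the very construction of the extension, $F(X)=\sum_{k}\lambda_{k}f(X_{k})$ and $F(Y)=\sum_{j}\mu_{j}f(Y_{j})$.

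The first step is to verify that these series converge absolutely in $\cT_{2}(\cH)$, which makes it legitimate to pull them out of an inner product. Applying \eqref{eq:W} at $X=Y=X_{k}$ gives $\norm{f(X_{k})}_{2}\leq\sqrt{1+\varepsilon}$, hence $\sum_{k}|\lambda_{k}|\,\norm{f(X_{k})}_{2}\leq\sqrt{1+\varepsilon}\,\norm{X}_{1}<\infty$, and analogously for $Y$. Bilinearity and continuity of the Hilbert--Schmidt inner product then yield
\begin{equation*}
\langle F(X),F(Y)\rangle=\sum_{k,j}\lambda_{k}\mu_{j}\,\Tr f(X_{k})f(Y_{j}),
\end{equation*}
while orthogonality of $\{X_{k}\}$ and $\{Y_{j}\}$ simultaneously produces the parallel expression $\langle X,Y\rangle=\sum_{k,j}\lambda_{k}\mu_{j}\,\Tr X_{k}Y_{j}$.

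Subtracting the two series term by term and applying \eqref{eq:W} to each pair $(X_{k},Y_{j})\in\mathbb{P}(\cH)^{2}$ bounds each summand in modulus by $\varepsilon|\lambda_{k}\mu_{j}|$. A single invocation of the triangle inequality then gives
\begin{equation*}
\left|\langle F(X),F(Y)\rangle-\langle X,Y\rangle\right|\leq\varepsilon\sum_{k,j}|\lambda_{k}||\mu_{j}|=\varepsilon\,\norm{X}_{1}\norm{Y}_{1},
\end{equation*}
which is exactly \eqref{eq:WF}.

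I do not anticipate a genuine obstacle here; the statement is essentially a bilinear bookkeeping exercise built on the pointwise hypothesis \eqref{eq:W}. The only non-trivial point is the interchange of summation and inner product in the infinite-dimensional setting, which is settled by the absolute-convergence estimate above. Possible negativity of the eigenvalues $\lambda_{k},\mu_{j}$ is harmless because the final bound is taken in absolute value, and non-uniqueness of the spectral decomposition plays no role either since one such decomposition was fixed in advance when defining $F$.
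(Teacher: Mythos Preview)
Your proof is correct and follows essentially the same route as the paper: expand both inner products via the chosen spectral decompositions, subtract term by term, and bound each summand using Eq.~\eqref{eq:W} together with $\sum_k|\lambda_k|=\norm{X}_1$. You add the absolute-convergence check for the infinite-dimensional case, which the paper leaves implicit; this is a welcome bit of extra care rather than a departure from the argument.
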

\begin{proof}
Let $X=\sum_k\lambda_k X_k$ and $Y=\sum_j \mu_j Y_j$ be the spectral decompositions that define $F$ on $X,Y$ and recall that  for elements of $\cT_1(\cH)$ the trace-norm $||\cdot||_1$ is  the sum of the absolute values of eigenvalues. The Lemma then follows from applying  Eq.~\eqref{eq:W} to
\begin{equation*}
 \left| \big\langle \sum^{}_{k}\lambda_{k}f(X_{k}), \sum^{}_{j}\mu_{j}f(Y_{k})\big\rangle-\langle X,Y\rangle \right|\leq \sum^{}_{j,k}|\lambda_{k}||\mu_{j}| \left| \langle \;f(X_{k}), f(Y_{k})\;\rangle-\langle X_{k},Y_{j}\rangle \right|.
\end{equation*}
\end{proof}
From here we can show that $F$ is almost-linear in the following sense:
\begin{lemma}[Almost-linearity of the extended map]\label{l:quasilinear}
Let $F:\cT_1(\cH)\to\cT_2(\cH)$ be any map satisfying Eq.~\eqref{eq:WF}. Then for all $m\in\mathbb{N}, \lambda\in \reals^{m}$ and $X_1,\ldots,X_m\in \cT_1(\cH)$ we have 
\begin{equation*}
\norm{\sum^{m}_{i=1}\lambda_{i}F(X_{i})-F\left(\sum^{m}_{i=1}\lambda_{i}X_{i}\right)}_{2}\leq 2\sqrt{\varepsilon} \sum^{m}_{i=1}|\lambda_{i}|\norm{X_{i}}_{1}.
\end{equation*}
\end{lemma}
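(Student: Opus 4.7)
The plan is to square the left-hand side and expand it as a Hilbert-Schmidt inner product, then apply Lemma~\ref{l:WF} (the Wigner condition for $F$) to each of the resulting pairings $\langle F(\cdot),F(\cdot)\rangle$. Writing $Y:=\sum_{i=1}^m\lambda_i X_i$ and $\Delta:=\sum_{i=1}^m\lambda_i F(X_i)-F(Y)$, the expansion gives
\begin{equation*}
\|\Delta\|_2^2=\sum_{i,j}\lambda_i\lambda_j\langle F(X_i),F(X_j)\rangle-2\sum_i\lambda_i\langle F(X_i),F(Y)\rangle+\langle F(Y),F(Y)\rangle.
\end{equation*}

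The key observation is that if we formally replace each $\langle F(A),F(B)\rangle$ by $\langle A,B\rangle$, the three terms collapse by bilinearity of the inner product: $\sum_{i,j}\lambda_i\lambda_j\langle X_i,X_j\rangle=\langle Y,Y\rangle$ and $\sum_i\lambda_i\langle X_i,Y\rangle=\langle Y,Y\rangle$, so the ``linear'' contribution is $\langle Y,Y\rangle-2\langle Y,Y\rangle+\langle Y,Y\rangle=0$. Hence $\|\Delta\|_2^2$ equals precisely the sum of the three errors produced by Lemma~\ref{l:WF}.

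Each such error is controlled by an $\varepsilon$ times a product of trace norms, and every trace norm that appears is bounded by $S:=\sum_i|\lambda_i|\|X_i\|_1$ (in particular $\|Y\|_1\le S$ by the triangle inequality). Summing the three contributions yields $\|\Delta\|_2^2\le\varepsilon S^2+2\varepsilon S^2+\varepsilon S^2=4\varepsilon S^2$, which is exactly the claimed bound. I do not expect a genuine obstacle here; the only thing to watch is the bookkeeping of the three error terms and the observation that, although $F$ need not be $1$-homogeneous on scalar multiples of negative sign or additive, Lemma~\ref{l:WF} applies directly to the inner products $\langle F(X_i),F(Y)\rangle$ even when $Y$ is a general trace-class element, because Lemma~\ref{l:WF} was stated for all of $\cT_1(\cH)$.
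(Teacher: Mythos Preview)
Your proposal is correct and follows essentially the same idea as the paper: both arguments expand $\|\Delta\|_2^2$ and use Lemma~\ref{l:WF} together with the cancellation of the exact (bilinear) part to reduce everything to error terms bounded by $\varepsilon S^2$. The only cosmetic difference is that the paper first records the intermediate estimate $|\langle\Delta,F(Z)\rangle|\le 2\varepsilon\|Z\|_1 S$ for arbitrary $Z\in\cT_1(\cH)$ and then applies it with $Z=X_i$ and $Z=Y$, whereas you expand the square into three groups of terms and bound each directly; the resulting bound $\|\Delta\|_2^2\le 4\varepsilon S^2$ is identical.
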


\begin{proof}
Consider $Z\in\cT_1(\cH)$ and use Eq.~\eqref{eq:WF} to bound 
\begin{align*}
&\left| \langle \sum^{m}_{i=1}\lambda_{i}F(X_{i})-F\left(\sum^{m}_{i=1}\lambda_{i}X_{i}\right), F(Z)\rangle \right|
 \\
&\leq \left| \sum^{m}_{i=1}\lambda_{i}\left(\langle F(X_{i}), F(Z)\rangle -\langle X_{i}, Z\rangle\right)  \right| + \left| \langle F\left(\sum^{m}_{i=1}\lambda_{i}X_{i}\right), F(Z)\rangle -\langle \sum^{m}_{i=1}\lambda_{i}X_{i}, Z\rangle  \right|, \\
&\leq\left( \sum^{m}_{i=1}\norm{\lambda_{i}X_{i}}_{1} +\norm{\sum^{m}_{i=1}\lambda_{i}X_{i}}_{1}\right)\norm{Z}_{1}\varepsilon\  \leq\ 2\varepsilon \norm{Z}_{1} \sum^{m}_{i=1}\norm{\lambda_{i}X_{i}}_{1}.
\end{align*}
Then by linearity of the Hilbert-Schmidt inner product
\begin{align*}
\norm{\sum^{m}_{i=1}\lambda_{i}F(X_{i})-F\left(\sum^{m}_{i=1}\lambda_{i}X_{i}\right)}^{2}_{2}&\leq \left| \langle \sum^{m}_{i=1}\lambda_{i}F(X_{i})-F\left(\sum^{m}_{i=1}\lambda_{i}X_{i}\right),\sum^{m}_{i=1}\lambda_{i}F(X_{i})\rangle\right|\\
&\quad+\left| \langle \sum^{m}_{i=1}\lambda_{i}F(X_{i})-F\left(\sum^{m}_{i=1}\lambda_{i}X_{i}\right),F\left(\sum^{m}_{i=1}\lambda_{i}X_{i}\right)\rangle\right|, \\
&\leq 2 \left(\sum^{m}_{i=1}\norm{\lambda_{i}X_{i}}_{1}\right)^{2}\varepsilon+2 \sum^{m}_{i=1}\norm{\lambda_{i}X_{i}}_{1}\norm{\sum^{m}_{i=1}\lambda_{i}X_{i}}_{1}\varepsilon,\\
&\leq \left(2\sum^{m}_{i=1}\norm{\lambda_{i}X_{i}}_{1}\right)^{2}\varepsilon.
\end{align*}

\end{proof}

Now we have all prerequisites for the proof of the main theorem.

\begin{proof}[{Proof of Thm.}~\ref{th:stabwigner}]  To show part (i), we define $\delta:=2\sqrt{\varepsilon}$ and consider the action of  $\hat{F}(X):=\Tr [AF(X)]$ on the unit-ball $\mathcal{B}_1(\cH)$. If $\norm{\lambda}_{1}\leq 1, \norm{A}_2\leq 1$ and $\norm{X_{j}}_1\leq 1$, then Cauchy-Schwarz and Lemma \ref{l:WF} imply
\begin{align*}
\left\lvert \hat{F}\left(\sum^{m}_{j}\lambda_{j}X_{j}\right)-\sum^{m}_{j}\lambda_{j}\hat{F}\left(X_{j}\right) \right\lvert &= \left\lvert \Tr A\left(\sum^{m}_{j=1}\lambda_{j}F(X_{j})-F\left(\sum^{m}_{j=1}\lambda_{j}X_{j}\right) \right) \right\lvert \\
&\leq \norm{A}_{2}\norm{\sum^{m}_{j=1}\lambda_{j}F(X_{j})-F\left(\sum^{m}_{j=1}\lambda_{j}X_{j}\right)}_{2} \ \leq\ \delta.
\end{align*}
Thus 
\begin{equation}\label{eq:lfW}
\sum^{m}_{j}\lambda_{j}\hat{F}\left(X_{j}\right) -\delta  \leq \hat{F}\left(\sum^{m}_{j}\lambda_{j}X_{j}\right)\leq  \sum^{m}_{j}\lambda_{j}\hat{F}\left(X_{j}\right) +\delta .
\end{equation}
Let $g_{-}$ and $g_{+}$ be the convex and concave envelopes of $\hat{F}$ over $\mathcal{B}_1(\cH)$. These are defined  as 
\begin{align*}
g_{-}(X)&:=\inf \left\{ \sum^{n}_{j=1}\lambda_{j}\hat{F}(X_{j})\; \big|\; X=\sum^{n}_{j=1}\lambda_{j}X_{j} \right\}, \\
g_{+}(X)&:=\sup \left\{ \sum^{n}_{j=1}\lambda_{j}\hat{F}(X_{j})\; \big|\; X=\sum^{n}_{j=1}\lambda_{j}X_{j} \right\},
\end{align*} 
taken over all finite convex decompositions of $X$ within the unit-ball $\mathcal{B}_1(\cH)$.
Using Eq.~\eqref{eq:lfW}, one verifies
\begin{equation}\label{eq:ube3}
g_{+}(X)-\delta \leq \hat{F}(X) \leq g_{-}(X)+\delta\quad \text{for all}\ X\in\mathcal{B}_{1}(\cH).
\end{equation}
Let $\Lambda_+$ and $\Lambda_-$ denote the subgraph of $X\mapsto g_{+}(X)-\delta$ and the supergraph of $X\mapsto g_{-}(X)+\delta$, respectively. Since $g_-$ and $-g_+$ are convex, $\Lambda_\pm$ are convex subsets of the direct-sum Banach space $\cT_1(\cH)\oplus\reals$. By construction they have non-empty interiors and due to Eq.~(\ref{eq:ube3}) the interiors are non-intersecting, i.e., ${\rm Int}(\Lambda_+)\cap {\rm Int}(\Lambda_-)=\emptyset$. By the geometric Hahn-Banach theorem, $\Lambda_+$ and $\Lambda_-$ can be separated by a closed hyperplane (cf. Fig.\ref{fig:1}b). Since, due to convexity, $\overline{{\rm Int}(\Lambda_\pm)}=\overline{\Lambda_\pm}$, this implies that there exists a continuous affine map $h:\cT_1(\cH)\to\reals$ such that for all $X\in\mathcal{B}_{1}(\cH)$, $g_{+}(X)-\delta \leq h(X) \leq g_{-}(X)+\delta$. Using that $\hat{F}\leq g_{+}$ and $g_{-}\leq \hat{F}$, the previous inequality implies:
\begin{equation*}
-\delta \leq g_{+}(X)-\hat{F}(X)-\delta \leq h(X)-\hat{F}(X) \leq g_{-}-\hat{F}(X) +\delta \leq \delta,
\end{equation*}
so $|\hat{F}(X)-h(X)|\leq \delta$. As $F(0)=0$ we can choose $h$ linear at the cost of $|\hat{F}(X)-h(X)|\leq 2\delta$. Defining $T_{A}:\cT_1(\cH)\to \cT_2(\cH)$ as $T_A(X):=h(X) A/\norm{A}_2^2$ then completes the proof of part $(i)$.\vspace*{5pt}

\textit{Proof of part (ii):} Let $\{A_{j}\}_{j=1}^{d^{2}}$ be a Hilbert-Schmidt orthonormal basis of self-adjoint operators on $\mathcal{H}=\comp^d$ and $h_{j}:\cT_1(\cH)\to \reals$  the corresponding linear maps from part $(i)$. Define a linear map $T:=\cT_1(\cH)\to\cT_2(\cH)$, $T(X):=\sum_{j=1}^{d^{2}}h_{j}(X)A_{j}$. Then for any $A=\sum_{j}a_{j}A_{j}$,

\begin{align*}
\Tr A\left(F(X)-T(X)\right)&=\sum_{i} a_{i}\Tr A_{i}F(X)-\sum_{i,j} a_{i}h_{j}(X)\Tr A_{i}A_{j} \\
&=\sum_{i} a_{i}\left(\Tr A_{i}F(X)-h_{i}(X)\right)\\
&\leq \norm{a}_{1}2\delta \leq 2\delta d\norm{A}_{2}.
\end{align*}
Therefore 
\begin{equation*}
\norm{F(X)-T(X)}_{2}=\sup_{\norm{A}_{2}\leq 1} \Tr A(F(x)-T(x)) \leq 2\delta d.
\end{equation*}

\end{proof}

\begin{figure}[ttt]
	\centering
		\includegraphics[clip, width=1\textwidth]{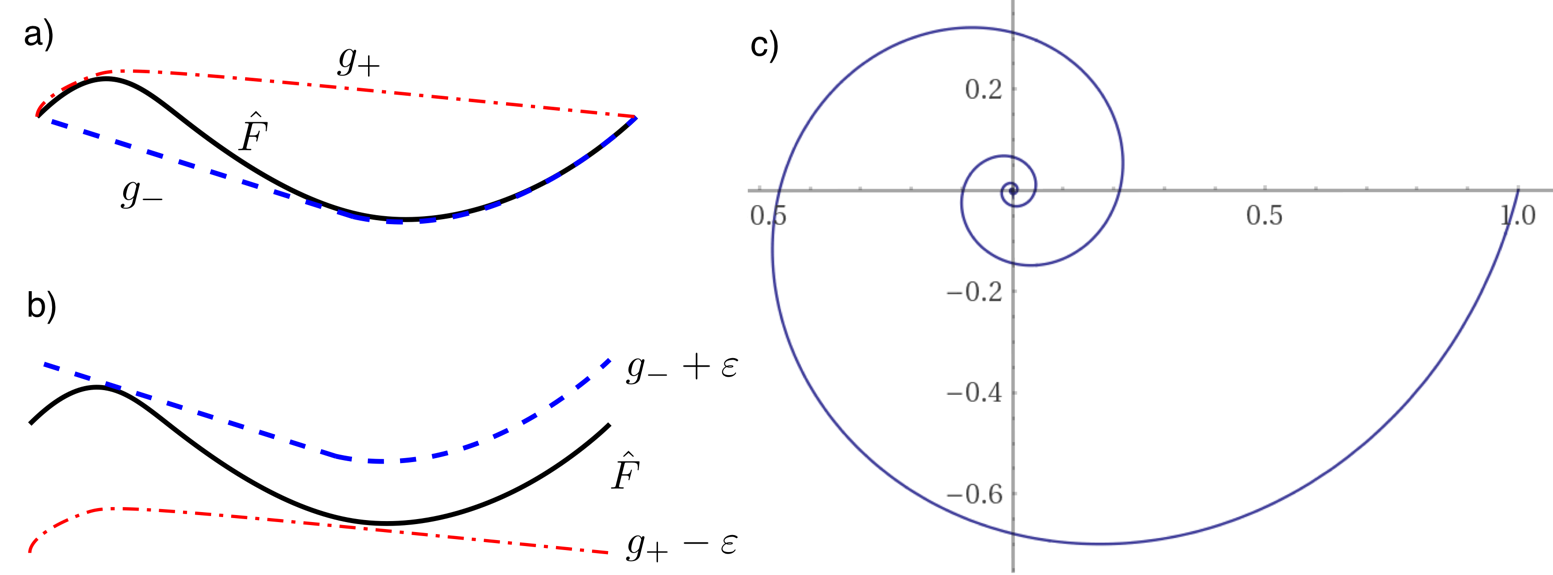}
	\caption{a) The convex ($g_-$) and concave ($g_+$) envelopes of $\hat{F}$. b) Shifting them by an appropriate $\varepsilon$, the corresponding supergraph and subgraph can be separated by a hyperplane, which then serves as a linear approximation of $\hat{F}$. c) The plot shows the image of the interval $[0,1]$ under the spiral map. Each point $z\in\comp$ undergoes a rotation around the origin by an angle that is proportional to $\ln |z|$. (For better visibility a large value  $\varepsilon=8$ is chosen for the plot, while $\varepsilon\sim 1/\ln d$ is considered in the proof.) }
	\label{fig:1}
\end{figure}

 \section{Almost-symmetries far from linear}\label{d_instability}
 
 In this section we will address the question from the other end and show that no linear approximation (as in Thm.\ref{th:stabwigner} ($ii$)) exists if the level of approximation is not allowed to depend on the dimension.  This is even true w.r.t. the operator norm, for which the intrinsic dimension dependence is minimal. The result is summarized in the following theorem where $S_{d}:=\{\psi\in\comp^{d}:\norm{\psi}_{2}=1 \}$ denotes the unit ball of $\comp^{d}$.

\begin{theorem}[Inapproximability]\label{thm:Dpendent}
Let $\varepsilon>0$ and $d\in\mathbb{N}$ such that $d\geq e^{\frac{4\pi}{\varepsilon}}+1$. There is a map $g:S_{d}\to S_{d}$ with the following properties:
\begin{enumerate}[(i)]
\item $\forall\psi,\phi\in S_{d}:$ \qquad $\left| |\langle g(\phi)|g(\psi)\rangle|^{2}-|\langle \phi|\psi\rangle|^{2}\right|\leq\varepsilon$.
\item For every linear map $T:\comp^{d\times d}\to \comp^{d\times d}$  we have: 
\begin{equation}
\sup_{\psi\in S_{d}}\norm{T(|\psi\rangle\langle\psi|)-|g(\psi)\rangle\langle g(\psi)|}_{\infty}\geq \frac{1}{3}.
\end{equation}
\end{enumerate}
\end{theorem}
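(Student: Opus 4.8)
The plan is to construct the map $g$ explicitly as a componentwise logarithmic spiral acting on the entries of the coefficient vector in a fixed orthonormal basis, and then to show that (i) such a spiral distorts transition probabilities only logarithmically slowly, while (ii) no linear map on $\comp^{d\times d}$ can track the resulting rank-one projections uniformly well. Concretely, fix the standard basis $\{e_j\}_{j=1}^d$ of $\comp^d$ and define on each coordinate a ``spiral'' $s_\alpha:\comp\to\comp$ by $s_\alpha(z):=z\,e^{i\alpha\ln|z|}$ for $z\ne 0$ and $s_\alpha(0):=0$, where $\alpha$ is a small parameter to be chosen of order $\varepsilon/\ln d$. Set $g(\psi):=\bigl(s_\alpha(\psi_1),\dots,s_\alpha(\psi_d)\bigr)$; since $|s_\alpha(z)|=|z|$ this preserves the norm, so $g$ maps $S_d$ to $S_d$.

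For part (i), I would expand $|\langle g(\phi)|g(\psi)\rangle|^2 - |\langle\phi|\psi\rangle|^2$ and observe that the only change is that each cross term $\overline{\phi_j}\psi_j\,\overline{\phi_k}\psi_k$ picks up a phase factor $e^{i\alpha(\ln|\psi_j|-\ln|\phi_j|-\ln|\psi_k|+\ln|\phi_k|)}$ (for the modulus-squared one must be a little careful and track the real part of a double sum). The key estimate is that $|e^{i\theta}-1|\le|\theta|$ together with the fact that $\sum_j|\phi_j||\psi_j|\le 1$ and $\sum_j|\phi_j|^2|\psi_j|^2 \le$ something controllable; the problematic logarithms $\ln|\psi_j|$ are unbounded as $|\psi_j|\to 0$, so the main technical point is to show the small amplitudes are harmless. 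I expect one shows that the contribution of any coordinate $j$ is weighted by $|\phi_j||\psi_j|$ (or its square), and $|z|^2\ln(1/|z|)$ is bounded, so each term contributes $O(\alpha)$ and there are effectively $O(1)$ of them after using Cauchy--Schwarz — giving a total distortion of order $\alpha$ times a harmless constant, hence $\le\varepsilon$ once $\alpha\sim\varepsilon/\ln d$... wait, in fact the cleanest route may be to bound the phase mismatch by $2\alpha\ln d$ type quantities only on the support where amplitudes are not too small and treat tiny amplitudes separately, which is exactly where $d\ge e^{4\pi/\varepsilon}+1$ enters.

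For part (ii), the strategy is a rigidity/linear-independence argument: feed the linear map $T$ a well-chosen finite family of pure states and show the constraints $T(|\psi\rangle\langle\psi|)\approx|g(\psi)\rangle\langle g(\psi)|$ are mutually inconsistent to within $1/3$ in operator norm. Natural test states are the ``phases on two coordinates'' states $|\psi_\theta\rangle = \tfrac{1}{\sqrt2}(e_1 + e^{i\theta}e_j)$ with $|\psi_\theta|$ scaled so that $\ln|\psi|$ sweeps a full $2\pi/\alpha$ range as a scaling parameter varies — more precisely, use states supported on two coordinates with a common small amplitude $r$, so that $g$ rotates the relative phase by an amount proportional to $\ln r$, which can be made to wind around the circle many times as $r\to 0$ while $|\psi\rangle\langle\psi|$ itself stays within a bounded two-dimensional family. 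Then $T$ applied to a bounded (indeed, affinely one-dimensional in the relevant variable) family must produce an affine image, but $g$ forces the image to wind around, and an affine curve cannot stay within distance $<1/3$ of a closed loop of the relevant diameter; quantitatively one picks three or four values of $r$ at which the spiral phases are roughly equispaced on the circle and uses the triangle inequality plus the fact that $|e_1\rangle\langle e_1|$-type diagonal blocks are fixed, forcing a contradiction.

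The main obstacle will be part (ii): making precise the sense in which ``$g$ winds but $T$ is affine'' and extracting the clean constant $1/3$. The subtlety is that $T$ is linear on all of $\comp^{d\times d}$, not just on the real span of the chosen projections, so one cannot directly say $T$ of a segment is a segment unless the chosen $|\psi\rangle\langle\psi|$ genuinely lie on an affine line (or low-dimensional affine subspace) in $\cT_1$; hence the test states must be engineered — likely using a single varying real parameter entering the amplitudes, with all other structure frozen — so that $\{|\psi\rangle\langle\psi|\}$ traces an affine (or at worst two-dimensional affine) set whose $T$-image is therefore also affine and thus cannot shadow the spiralling targets $\{|g(\psi)\rangle\langle g(\psi)|\}$. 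Once the family is chosen correctly, the final step is a short packing/triangle-inequality computation: if all four targets were within $1/3$ of a common affine image of dimension $\le 1$, two of them would be within $2/3$ of each other along that line while the spiral places them nearly antipodally at distance close to $1$ — a contradiction. I would also double-check that the hypothesis $d\ge e^{4\pi/\varepsilon}+1$ gives exactly enough ``winding room'' ($\alpha\ln(1/r_{\min})\ge 2\pi$ or so) for the antipodal placement in (ii) while simultaneously keeping the distortion in (i) below $\varepsilon$.
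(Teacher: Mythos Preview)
Your construction of $g$ as the componentwise spiral is correct, but two key ideas from the paper are missing, one in each part.

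\textbf{Part (i).} You have the parameter wrong: the spiral exponent is simply $\varepsilon/2$, not $\varepsilon/\ln d$, and the bound $\bigl||\langle g(\phi)|g(\psi)\rangle|^2-|\langle\phi|\psi\rangle|^2\bigr|\le\varepsilon$ holds \emph{in every dimension}. The unbounded logarithms that worry you are tamed not by a small-amplitude cutoff but by the elementary inequality $\ln x\le x-1$: after $|e^{i\theta}-1|\le|\theta|$ one is left with terms $|\phi_k\phi_l\psi_k\psi_l|\,\bigl|\ln\tfrac{|\phi_k\psi_l|}{|\phi_l\psi_k|}\bigr|$, and applying $t\ln t\le t-1$ (equivalently $ab\,|\ln(a/b)|\le|a^2-ab|$) converts the log into a polynomial expression that sums to at most $2$. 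The dimension hypothesis $d\ge e^{4\pi/\varepsilon}+1$ plays \emph{no role} in (i); it is used only in (ii).

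\textbf{Part (ii).} Your affine-versus-winding scheme with two-coordinate test states does not work, for two linked reasons. First, there is no nontrivial affine segment of rank-one projections, so the premise ``$\{|\psi\rangle\langle\psi|\}$ traces an affine line'' cannot be arranged. Second, and more fatally, on a two-coordinate state $\sqrt{1-r^2}\,e_1+r\,e_2$ the spiral changes only the off-diagonal phase, and the off-diagonal entry has modulus $r\sqrt{1-r^2}$; to make the phase wind through $2\pi$ you need $r\lesssim e^{-4\pi/\varepsilon}$, at which point $|g(\psi)\rangle\langle g(\psi)|$ and $|\psi\rangle\langle\psi|$ differ by at most $2r$ in any norm, so $T=\mathrm{id}$ already beats $1/3$. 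The large dimension is \emph{essential} precisely to escape this: the paper uses the state $|\varphi\rangle=\tfrac{1}{\sqrt2}\bigl(1,\tfrac{1}{\sqrt{d-1}},\ldots,\tfrac{1}{\sqrt{d-1}}\bigr)$, whose $d-1$ small components each have amplitude $\sim(d-1)^{-1/2}$ (hence large $|\ln|\cdot||$) yet collectively carry half the mass, so that $\langle\varphi|g(\varphi)\rangle=\tfrac12\bigl(1+e^{-i\varepsilon\ln(d-1)/4}\bigr)$ vanishes exactly when $\varepsilon\ln(d-1)=4\pi$. To turn this single orthogonality into a bound valid for \emph{every} linear $T$, the paper does not chase $T$ along a curve but instead symmetrises: the spiral commutes with the group $G$ of monomial unitaries (diagonal $\times$ permutation), so averaging $UT(U^*\cdot U)U^*$ over $G$ can only improve the approximation, and any $G$-covariant linear map has the three-parameter form $A\mapsto\alpha\Tr[A]\iden+\beta A+\gamma\operatorname{diag}(A)$. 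Evaluating this at $|\varphi\rangle\langle\varphi|$ against the two vectors $\varphi$ and $g(\varphi)$, together with a projected estimate, yields lower bounds $\tfrac{|\beta+1|}{2}$ and $\tfrac{|\beta-1|}{4}$, which combine to give $1/3$.
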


\noindent Particular instances for $T$ would be $T(\cdot)=V\cdot V^{*}$ where $V$ is a unitary or anti-unitary on $\comp^{n}$. The heart of the proof of Thm.~\ref{thm:Dpendent} is the \textit{spiral map} (see Fig.\ref{fig:1}c) 
\begin{equation}\label{eq:spiralmap}
\comp\ni z\mapsto z|z|^{i\frac{\varepsilon}{2}}=ze^{i\frac{\varepsilon}{2}\ln|z|}.
\end{equation}
Its use  goes back at least to the work of Fritz John~\cite{John} and it has since then been used in various similar proofs, e.g. in Ref.~\onlinecite{Eva02,K91}.  It enters our discussion through the following:

\begin{lemma}\label{c:almostSym}
For any $\varepsilon >0$, let $g:\comp^d\rightarrow\comp^d$ be the map that acts as in Eq.(\ref{eq:spiralmap}) componentwise. 
\begin{enumerate}[(i)]
\item For all $\psi,\phi\in S_{d}$ we have
$
\left| |\langle g(\psi)|g(\phi)\rangle|^{2}-|\langle \psi|\phi\rangle|^{2}\right| \leq \varepsilon.
$
\item If $\varepsilon\ln (d-1)=4 \pi$, then $|\langle \varphi| g(\varphi) \rangle|=0$ holds for $|\varphi\rangle:=\frac{1}{\sqrt{2}}(1,\frac{1}{\sqrt{d-1}},\ldots,\frac{1}{\sqrt{d-1}})$.
\end{enumerate}

\end{lemma}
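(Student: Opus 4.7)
The plan is to prove both parts by direct computation using the componentwise action of $g$. Writing $\psi_k = r_k e^{i\alpha_k}$ and $\phi_k = s_k e^{i\beta_k}$, the key observation is that componentwise
\begin{equation*}
\overline{g(\psi)_k}\,g(\phi)_k \;=\; \overline{\psi_k}\phi_k\,e^{i\omega_k},\qquad \omega_k:=\tfrac{\varepsilon}{2}\ln(s_k/r_k),
\end{equation*}
so only a per-component phase $\omega_k$ distinguishes $\langle g(\psi)|g(\phi)\rangle$ from $\langle\psi|\phi\rangle$. Part (ii) will then follow immediately from this identity, while part (i) reduces to controlling how strongly the phases $\{\omega_k\}$ perturb $|\langle\psi|\phi\rangle|^2$.

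For part (i), set $v_k:=\overline{\psi_k}\phi_k$ and use the identity above to write
\begin{equation*}
|\langle g(\psi)|g(\phi)\rangle|^2-|\langle\psi|\phi\rangle|^2 \;=\; \sum_{j,k}\overline{v_j}v_k\bigl(e^{i(\omega_k-\omega_j)}-1\bigr).
\end{equation*}
Diagonal terms vanish and pairing $(j,k)$ with $(k,j)$ makes the right-hand side manifestly real. The triangle inequality together with $|e^{it}-1|=2|\sin(t/2)|\le|t|$ then gives
\begin{equation*}
\Bigl|\,|\langle g(\psi)|g(\phi)\rangle|^2-|\langle\psi|\phi\rangle|^2\,\Bigr| \;\le\; 2\sum_{j<k}r_js_jr_ks_k\,|\omega_k-\omega_j|.
\end{equation*}
Using $|\omega_k-\omega_j|\le|\omega_j|+|\omega_k|$ and rearranging by collecting pairs, I bound the right-hand side by $\varepsilon\,S\,\sum_j r_js_j\,|\ln(s_j/r_j)|$ with $S:=\sum_k r_ks_k\le 1$ (Cauchy--Schwarz). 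For the remaining factor I plan to apply the elementary inequality $-u\ln u\le 1/e$ for $u\in(0,1]$, taken with $u:=\min(r_j,s_j)/\max(r_j,s_j)$, which gives
\begin{equation*}
r_js_j\,|\ln(s_j/r_j)| \;\le\; \tfrac{1}{e}\max(r_j,s_j)^2 \;\le\; \tfrac{1}{e}(r_j^2+s_j^2),
\end{equation*}
and summing yields $\sum_j r_js_j|\ln(s_j/r_j)|\le 2/e$. The overall bound is therefore $2\varepsilon/e<\varepsilon$.

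For part (ii), the vector $\varphi$ has only two distinct component moduli, $|\varphi_1|=1/\sqrt{2}$ and $|\varphi_k|=1/\sqrt{2(d-1)}$ for $k\ge 2$, so
\begin{equation*}
\langle\varphi|g(\varphi)\rangle \;=\; \sum_k|\varphi_k|^2 e^{i(\varepsilon/2)\ln|\varphi_k|} \;=\; \tfrac{1}{2}e^{-i(\varepsilon/4)\ln 2}\bigl(1+e^{-i(\varepsilon/4)\ln(d-1)}\bigr),
\end{equation*}
which vanishes precisely when $(\varepsilon/4)\ln(d-1)\equiv\pi\pmod{2\pi}$, in particular under the stated condition $\varepsilon\ln(d-1)=4\pi$. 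The main obstacle lies in part (i): a naive Cauchy--Schwarz bound on $\sum_j r_js_j|\ln(s_j/r_j)|$ depends on $d$, and the decisive step is recognizing the factor $r_js_j|\ln(s_j/r_j)|$ as $\max(r_j,s_j)^2\cdot(-u\ln u)$ so that the universal bound $-u\ln u\le 1/e$ applies uniformly in both $d$ and the specific states $\psi,\phi$.
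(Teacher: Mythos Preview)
Your proof is correct. Both parts follow the same overall strategy as the paper: expand $|\langle g(\psi)|g(\phi)\rangle|^2-|\langle\psi|\phi\rangle|^2$ as a double sum and use $|e^{it}-1|\le|t|$ to reduce to a logarithmic expression, then apply an elementary pointwise bound. The difference lies in that last step. The paper bounds the cross term $|\phi_k\phi_l\psi_k\psi_l|\cdot\bigl|\ln|\phi_k\psi_l/\phi_l\psi_k|\bigr|$ directly via $\ln x\le x-1$, obtaining $\bigl||\phi_k\psi_l|^2-|\phi_k\phi_l\psi_k\psi_l|\bigr|$, and then relaxes this to $|\phi_k\psi_l|^2+|\phi_k\phi_l\psi_k\psi_l|$ before summing to $\varepsilon$. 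You instead decouple the indices through $|\omega_k-\omega_j|\le|\omega_j|+|\omega_k|$, which factorizes the double sum and leaves the single sum $\sum_j r_js_j|\ln(s_j/r_j)|$; the inequality $-u\ln u\le 1/e$ then gives the bound $2\varepsilon/e$. Your route avoids the implicit $(k,l)\leftrightarrow(l,k)$ symmetrization that the paper's termwise use of $\ln x\le x-1$ relies on, and it yields a slightly sharper constant $2/e<1$. Part~(ii) is identical to the paper's computation up to pulling out the common phase $e^{-i(\varepsilon/4)\ln 2}$.
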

\begin{proof}
Let $\psi=(\psi_{1},\ldots,\psi_{d})$ and $\phi=(\phi_{1},\ldots,\phi_{d})$. Then using the inequalities $|e^{i\alpha}-1|\leq |\alpha|$ and $\ln x\leq x-1$ for $\alpha\in\reals$ and $x>0$ we obtain
\begin{align*}
\left| |\langle g(\psi)|g(\phi)\rangle|^{2}-|\langle \psi|\phi\rangle|^{2}\right| &\leq \sum_{k,l} |\phi_{k}\phi_{l}\psi_{k}\psi_{l}|\left|e^{i\frac{\varepsilon}{2}\ln\left|\frac{\phi_{k}\psi_{l}}{\phi_{l}\psi_{k}}\right|}-1 \right|,\\
&\leq \frac{\varepsilon}{2}\sum_{k,l} \left|\phi_{k}\phi_{l}\psi_{k}\psi_{l}\ln\left|\frac{\phi_{k}\psi_{l}}{\phi_{l}\psi_{k}}\right|\right|,\\
&\leq \frac{\varepsilon}{2}\sum_{k,l} \left||\phi_{k}\psi_{l}|^{2}-|\phi_{k}\phi_{l}\psi_{k}\psi_{l}|\right|,\\
&\leq \frac{\varepsilon}{2}\sum_{k,l} |\phi_{k}\psi_{l}|^{2}+|\phi_{k}\phi_{l}\psi_{k}\psi_{l}|,\\
&= \frac{\varepsilon}{2}\norm{\phi}^{2}_{2}\norm{\psi}^{2}_{2}+\frac{\varepsilon}{2}\left(\sum_{k}|\phi_{k}\psi_{k}|\right)^{2},\\
&\leq  \varepsilon\norm{\phi}^{2}_{2}\norm{\psi}^{2}_{2}= \varepsilon.
\end{align*}
Part $(ii)$ of the Lemma follows from inserting $\varepsilon\ln (d-1)=4 \pi$ into 
$$ |\langle\varphi|g(\varphi)\rangle|=\left|\frac12+\frac12\exp\Big[-\frac{i}{4}\varepsilon\ln(d-1)\Big]\right|.$$

\end{proof}

\begin{proof}[Proof of Theorem~\ref{thm:Dpendent}]
We use the spiral map $g:S_{d}\to S_{d}$ that acts componentwise as in Eq.~(\ref{eq:spiralmap}). If $d>\exp[4\pi/\varepsilon]+1$, then we decrease $\varepsilon$ until equality is achieved. In this way, Lemma \ref{c:almostSym} proves part $(i)$ of the theorem and at the same time guarantees that $g$ maps $\varphi$ onto an orthogonal vector.  In order to prove a bound on the best linear approximation, we exploit the symmetry of $g$. Let $G$ be the subgroup of $U(d)\subseteq\comp^{d\times d}$ that consists of all unitaries of the form $D\Pi$ where $D$ is a diagonal unitary and $\Pi$ a permutation matrix. Then $\forall\psi\in\comp^{d}: U^{-1}g(U\psi)=g(\psi)$ holds for all $U\in G$. The idea is now to argue that w.l.o.g. the best linear approximation has the same symmetry.

For every unitarily invariant norm on $\comp^{d\times d}$, in particular for the operator norm, and for any linear map $T:\comp^{d\times d}\rightarrow\comp^{d\times d}$ consider the following chain of inequalities:
\begin{align*}
\sup_{\psi\in S_d}\norm{T(|\psi\rangle\langle\psi|)-|g(\psi)\rangle\langle g(\psi)|}&=\sup_{\psi\in S_d}\norm{UT(U^{*}|\psi\rangle\langle\psi|U)U^{*}-|g(\psi)\rangle\langle g(\psi)|} \\
&\geq \sup_{\psi\in S_d} \int \norm{UT(U^{*}|\psi\rangle\langle\psi|U)U^{*}-|g(\psi)\rangle\langle g(\psi)|} dU \\
&\geq \sup_{\psi\in S_d}  \norm{\int UT(U^{*}|\psi\rangle\langle\psi|U)U^{*} dU-|g(\psi)\rangle\langle g(\psi)|},
\end{align*}
where $U\in G$, $dU$ is the Haar measure of $G$, and the first inequality uses $\sup \sum_{k}g_{k}\leq \sum_{k} \sup g_{k}$.  Following these inequalities, we can lower bound the quality of approximation of any linear map $T$ by the one of its symmetrized counterpart
\begin{equation*}
T_{G}(A):=\int_{G} UT(U^{*}AU)U^{*}dU.
\end{equation*}
As proven in Lemma 1 of Ref.~\cite{Anna} any linear map with this symmetry is specified by three parameters $\alpha,\beta,\gamma\in\comp$ and has the form 
\begin{equation}\label{eq:avT}
T_{G}(A)=\alpha \Tr [A] \iden +\beta A + \gamma \operatorname{diag}(A), 
\end{equation}
where $\operatorname{diag}(A)$ is the diagonal part of the matrix $A$. (Strictly speaking, Ref.~\cite{Anna} considers quantum channels, but since the relevant commutant is a vector space that is closed under taking adjoints, the parametrization in Eq.~(\ref{eq:avT}) holds for all linear maps.)

Using the state $|\varphi\rangle=\frac{1}{\sqrt{2}}(1,\frac{1}{\sqrt{d-1}},\ldots,\frac{1}{\sqrt{d-1}})$ from Lemma \ref{c:almostSym} for which $\langle\varphi|g(\varphi)\rangle=0$, we can bound
\begin{align*}
& \sup_{\psi\in S_d}\norm{T(|\psi\rangle\langle\psi|)-|g(\psi)\rangle\langle g(\psi)|}_{\infty}\geq \norm{T_{G}(|\varphi\rangle\langle\varphi|)-|g(\varphi)\rangle\langle g(\varphi)|}_{\infty} \\
&\geq \max\big\{ |\langle g(\varphi)|\left(T_G(|\varphi\rangle\langle\varphi|)-|g(\varphi)\rangle\langle g(\varphi)|\right) |g(\varphi)\rangle|, |\langle \varphi|T_G(|\varphi\rangle\langle\varphi|)-|g(\varphi)\rangle\langle g(\varphi)| |\varphi\rangle| \big\}, \\
&= \max\left\{\left|\alpha+\gamma\frac{d}{4(d-1)}-1 \right|, \left|\alpha+\beta+\gamma\frac{d}{4(d-1)}\right| \right\}\ \geq\ \frac{|\beta+1|}{2},
\end{align*}
where the last step used that for $x,y\in\comp$, $\max\{|x|,|y|\}\geq (|x|+|y|)/2 \geq |x-y|/2$. In order to eventually arrive at a parameter-independent lower bound we need a second inequality in which $\beta$ appears in a different way. For that purpose, let us denote the matrix of ones by $J_{d}\in\reals^{d\times d}$, $J_{ij}=1$, and the projection $P:=\iden-|1\rangle\langle 1|$. Since the operator-norm is sub-multiplicative we can obtain another lower bound via
\begin{align*}
\norm{T_{G}(|\varphi\rangle\langle\varphi|)-|g(\varphi)\rangle\langle g(\varphi)|}_{\infty}&\geq \norm{P(T_{G}(|\varphi\rangle\langle\varphi|)-|g(\varphi)\rangle\langle g(\varphi)|)P}_{\infty}, \\
&= \norm{\left(\alpha+\frac{\gamma}{2(d-1)}\right)\iden_{d-1}+\frac{\beta-1}{2(d-1)}J_{d-1}}_{\infty}, \\
&= \max\left\{\left|\alpha+\frac{\gamma}{2(d-1)}\right|,\left|\alpha+\frac{\gamma}{2(d-1)}+\frac{\beta-1}{2}\right|\right\}, \\
&\geq \frac{|\beta-1|}{4}.
\end{align*}
Finally, combining the two $\beta$-dependent bounds, we obtain
$$
 3 \sup_{\psi\in S_d}\norm{T(|\psi\rangle\langle\psi|)-|g(\psi)\rangle\langle g(\psi)|}_{\infty} \geq \frac{|\beta+1|}{2} + 2\; \frac{|\beta-1|}{4} 
 \geq \frac{\beta+1}{2} - \frac{\beta-1}{2}\ =\ 1.
$$
\end{proof}

\section{Discussion}\label{Discussion}

The inapproximability result of Thm.~\ref{thm:Dpendent} shows that a dimension-independent linear approximation result is not possible. This raises the question about the optimal dimension-dependence of a positive result of the form in Thm.~\ref{th:stabwigner} ($ii$). Thm.~\ref{thm:Dpendent}  imposes a logarithmic lower bound in the following way: 

For any map $f:\mathbb{P}(\comp^{d})\to\mathcal{T}_{2}(\comp^d)$ that fulfills Wigner's condition up to $\varepsilon$ according to Eq.~(\ref{eq:W}) define $\Delta(f):=\inf_{T}\sup_{\psi\in S^{d}}\norm{T(|\psi\rangle\langle\psi|)-f(|\psi)\rangle\langle \psi)|)}_{\infty}$ where the infimum is taken over all linear maps $T:\comp^{d\times d}\rightarrow\comp^{d\times d}$. Assume that $\sup_{f}\Delta(f)\leq \kappa(d)\varepsilon^{p}$ for some function $\kappa$ and some $p>0$. Choosing  $\varepsilon=4\pi/\ln(d-1)$ for sufficiently large $d$, Thm.~\ref{thm:Dpendent} provides a map $g$ that fulfills Eq.~\eqref{eq:W} together with $\Delta(g)\geq 1/3$. Therefore
\begin{align*}
\kappa(d) &\geq \frac{\sup_{f}\Delta(f)}{\varepsilon^{p}}, \\
 &\geq \frac{1}{3}\left(\frac{\ln(d-1)}{4\pi}\right)^{p}.
\end{align*}
On the other hand, Thm.~\ref{th:stabwigner} guarantees that $\sup_{f}\Delta(f)\leq 4d \sqrt{\varepsilon}$ so that  a significant gap between upper and lower bound remains. In order to close this gap, more sophisticated tools from  Banach space theory might be useful (see end remark in Ref.~[\onlinecite{K91}]). \vspace*{7pt}

\emph{Acknowledgments:} MMW acknowledges funding by the Deutsche Forschungsgemeinschaft (DFG, German Research Foundation) under Germany’s Excellence Strategy – EXC-2111 – 390814868.



\end{document}